\newtheorem{theorem}{Theorem}
\newtheorem{corollary}{Corollary}
\DeclareMathOperator{\ord}{o}
\DeclareMathOperator*{\argmax}{arg\,max}
\newcommand{\Igmi}{R_{\rm gmi}}
\newcommand{\Igmis}{R_{\rm gmi}(s)}
\newcommand{\Cmlc}[1]{{\sf C}^{\rm mlc}_{ #1}}
\newcommand{\Xc}{\mathcal{X}}
\newcommand{\Yc}{\mathcal{Y}}
\newcommand{\Mc}{\mathcal{M}}
\newcommand{\Xjb}{\Xc_b^j}
\newcommand{\Cc}{{\cal C}}
\newcommand{\bv}{{\boldsymbol b}}
\newcommand{\xv}{{\boldsymbol x}}
\newcommand{\yv}{{\boldsymbol y}}
\newcommand{\Xm}{{\boldsymbol X}}
\newcommand{\Ym}{{\boldsymbol Y}}
\newcommand{\msf}{{\sf m}}
\newcommand{\eqdef}{\stackrel{\Delta}{=}}
\newcommand{\beq}{\begin{equation}}
\newcommand{\eeq}{\end{equation}}
\newcommand{\snr}{{\sf snr}}
\newcommand{\EbNo}{\frac{E_{\rm b}}{N_0}}
\newcommand{\ebnolim}{\frac{E_{\rm b}}{N_0}_{\rm lim}}
\newcommand{\e}{{\rm e}}
\newcommand{\EE}{\mathbb{E}}
\newcommand{\FF}{\mathbb{F}}
\newcommand{\CC}{\mathbb{C}}
\newcommand{\pd}{P}
\newcommand{\Ccm}[1]{{\sf C}^{\rm cm}_{ #1}}
\newcommand{\Cbicm}[1]{{\sf C}^{\rm bicm}_{ #1}}
\newcommand{\ca}{c_1}
\newcommand{\cb}{c_2}
\title{Shaping Bits}
\author{
\authorblockN{Albert Guill{\'e}n i F{\`a}bregas}
\authorblockA{Engineering Department \\
University of Cambridge, UK \\
\tt guillen@ieee.org}
\and
\authorblockN{Alfonso Martinez}
\authorblockA{Centrum Wiskunde \& Informatica\\
Amsterdam, The Netherlands\\
\tt alfonso.martinez@ieee.org}
}
\begin{document}

\maketitle

\begin{abstract}
The performance of bit-interleaved coded modulation (BICM) with bit shaping (i.e., non-equiprobable bit probabilities in the underlying binary code) is studied. For the Gaussian channel, the rates achievable with BICM and bit shaping are practically identical to those of coded modulation or multilevel coding. This identity holds for the whole range of values of signal-to-noise ratio. Moreover, the random coding error exponent of BICM significantly exceeds that of multilevel coding and is very close to that of coded modulation. 
\end{abstract}

\footnotetext[1]{This work has been supported by the International Joint Project 2008/R2 of the Royal Society.}
\setcounter{footnote}{1}

\section{Introduction}

For non-binary modulations in the Gaussian channel, three main constructions for coding schemes achieve information rates close to the channel capacity are known: coded modulation (CM), bit-interleaved coded modulation (BICM), and multilevel coding (MLC). CM dates back to the pioneering work of Ungerb{\"o}ck \cite{ungerboeck1982ccm}, and merges coding and modulation in a single entity. In contrast, BICM separates them, and is built around the mapping of a simple binary code onto a non-binary modulation \cite{zehavi1992ptc,caire1998bic,guillen2008bic}. MLC makes use of a layer of binary codes, one for each bit in the binary label of the modulation symbol \cite{imai1977nmc,wachsmann1999mct}. 

CM allows for the highest information rates. It is closely followed by multilevel coding (for equiprobable modulation symbols the rates coincide) and, with a larger loss, by BICM. In terms of error exponents, the situation is somewhat reversed, with CM again the best, but now BICM beats multilevel coding at low rates. Whereas previous analyses in the literature assume that the modulation symbols are used equiprobably, in this work we lift this assumption and consider shaping, whereby the bit or symbol probabilities are arbitrary. We will see that BICM with shaping achieves both information rates and error exponents very close to those of CM, thus closing the gap which made multilevel coding better in terms of information rates.

This paper is organized as follows. In Sect.~\ref{sec:preliminaries} we introduce the necessary concepts and notation describing the various schemes. In Sect.~\ref{sec:achievableRates} we derive the achievable rates for BICM with shaping by using mismatched decoding theory. These results are particularized for the Gaussian channel in Sect.~\ref{sec:shapingGaussian}, which also includes some numerical results.


\section{Preliminaries}
\label{sec:preliminaries}
\subsection{Blockwise Coded Transmission}

Consider a memoryless channel with input $X$ and output $Y$, respectively belonging to the sets $\Xc$ and $\Yc$. A block code $\Mc\subseteq\Xc^N$ is a set of $|\Mc|$ vectors (or codewords) $\xv$ of length $N$ (the number of channel uses), i.\ e.\ $\xv =(x_1,\dotsc,x_N)\in \Xc^N$. The output $\yv\eqdef(y_1,\dotsc,y_N)$ is a random transformation of the input with transition probability distribution $P_{\Ym|\Xm}(\yv|\xv)$. 
For memoryless channels the distribution $P_{\Ym|\Xm}(\yv|\xv)$ admits the decomposition
\beq
P_{\Ym|\Xm}(\yv|\xv) = \prod_{k=1}^N P_{Y|X}(y_k|x_k)
\eeq
With no loss of generality, we limit our attention to continuous output and identify $P_{Y|X}(y|x)$ as a probability density function. We adopt the convention that capital letters represent random variables, while the corresponding small letters correspond to realizations of the variables.


At the source, a message $\msf$ drawn with equal probability from a message set $\Mc$
is mapped onto a codeword $\xv$.
We denote this encoding function by $\phi$, i.\ e.\ $\phi(\msf) = \xv$. 
The corresponding transmission rate $R$ is given by $R \eqdef \frac{1}{N}\log|\Mc|$.
At the receiver, the decoder determines the codeword {\em decoding metric}, denoted by $q(\xv,\yv)$, for all codewords, and outputs the message $\widehat{\msf}$ whose  metric is largest,
\begin{align}
\widehat{\msf}& = \argmax_{\msf\in\{1,\dotsc,|\Mc|\}} q(\phi(\msf),\yv) 
\label{eq:min_metric_dec}.
\end{align}
The metrics we consider are products of symbol decoding metrics $q(x,y)$, namely (with some abuse of notation)
\beq
q(\xv,\yv) = \prod_{k=1}^Nq(x_k,y_k).
\label{eq:symbol_metric}
\eeq 

For maximum likelihood (ML) decoders, the decoding metric is given by $q(x,y) = P_{Y|X}(y|x)$. More generally, a decoder finds the most likely codeword as long as the metric $q(x,y)$ is a strictly increasing bijective function of the channel transition probability $P_{Y|X}(y|x)$. Instead, if the metric $q(x,y)$ is not a bijective function of the channel transition probability, 
we have a {\em mismatched decoder} \cite{merhav1994irm,ganti2000mdr}. 

Of special interest is the random ensemble corresponding to CM, for which 1) the channel inputs are selected independently for each codeword component according to a probability distribution $P_{X}(x)$, and 2) the decoder uses the ML metric. In this case, and for practical reasons, the modulation set $\Xc$ is taken finite. Let $M\eqdef|\Xc|$ denote the cardinality of $\Xc$ and $m \eqdef \log_2M$ the number of bits 
required to index a symbol.
%
The largest information rate that can be achieved with CM under the constraint $x\in\Xc$ is  is 
\beq\label{eq:opt_cm}
\Ccm{} =\sup_{P_{X}(X)}I(X;Y).
\eeq
Moreover, for any input distribution $P_X(X)$, the block error probability $P_e$ satisfies \cite{gallager1968ita}
\begin{align}
\bar{P}_e &\leq \e^{-NE_{\rm r}(R)}
\label{eq:pe_ave_err_exp}
\end{align}
where $E_{\rm r}(R)=\sup_{0\leq\rho\leq 1}E_0(\rho) - \rho R$ and
\beq
E_0(\rho) \eqdef -\log \EE \left[\left(\sum_{x'} P_X(x') \left(\frac{P_{Y|X}(Y|x')}{P_{Y|X}(Y|X)}\right)^{\frac{1}{1+\rho}}\right)^\rho\right].
\label{eq:generalized_gallager}
\eeq
The expectation is carried out according to the joint distribution $P_{X,Y}(x,y)=\pd_{Y|X}(y|x)P_X(x)$.

\subsection{Bit-Interleaved Coded Modulation}

In practical CM schemes, since the codewords are selected elements of $\Xc^N$ and the alphabet $\Xc$ has typically more than 2 elements, the corresponding codes are in some sense non-binary. BICM is a different construction where the underlying code is binary. Originally analyzed in \cite{caire1998bic} under the assumption of infinite-depth interleaving, this restriction was recently lifted in \cite{guillen2008bic,martinez2009bic}, where it was shown that BICM has a natural description in terms of mismatched decoding.

The BICM encoder generates a vector of $mN$ bits, $\bv = (b_1,\dotsc,b_{mN})$, i.\ e.\ $\phi(\msf) = \bv$. This vector is mapped onto a vector of $N$ modulation symbols according to a labeling rule $\mu:\FF_2^m\to\Xc$, such that
\beq
x_k = \mu\bigl(b_{(k-1)m+1},\dotsc,b_{(k-1)m+1}\bigr), \quad k = 1,\dotsc,N.
\eeq 
Note that the interleaver which gives its name to BICM has been absorbed in this description of the encoder. Analogously, we denote the inverse labeling by $b_j$, so that $b_j(x)$ is the $j$-th bit in the binary label of modulation symbol $x$, for $j = 1, \dotsc, m$. By construction, the modulation symbols $x$ are used with probabilities 
\beq
P_X^{\rm bicm}(x) = \prod_{j=1}^m P_{B_j}\bigl(b_j(x)\bigr).
\label{eq:px_bicm}
\eeq

In addition to the different code construction, BICM also differs from CM at the receiver side. First, let us define the sets $\Xjb$ as those elements of $\Xc$ having bit $b$ in the $j$-th label position, i.e., $\Xjb \eqdef \{ x\in\Xc : b_j(x) = b\}$. 
The BICM symbol metric combines the $m$ bit metrics $q_j(b_j,y)$ given by
\beq
q_j(b_j(x) = b,y) = \sum_{x'\in\Xjb}P_{Y|X}(y|x') P_X^{\rm bicm}(x'),
\label{eq:bit_metric_sum}
\eeq
as if the $m$ bits in a symbol were independent, i.e.,
\beq
q^{\rm bicm}(x,y) = \prod_{j=1}^m q_j\bigl(b_j(x),y\bigr).
\eeq
Hence, the BICM receiver uses the following symbol metric
\beq
q(x,y) = \prod_{j=1}^m \Biggl(\sum_{x'\in\Xc^j_{b_j(x)}}
P_{Y|X}(y|x')\prod_{j'=1}^m P_{B_{j'}}(b_{j'}(x'))\Biggr).
\label{eq:symbol_metric_sum}
\eeq

\subsection{Multilevel Coding}

Multilevel codes (MLC) combined with multistage decoding (MSD) have been proposed \cite{imai1977nmc,wachsmann1999mct} as an efficient method to attain the channel capacity by using binary codes. For BICM, a single binary code $\Cc$ is used to generate a binary codeword, which is used to select modulation symbols by a binary labeling function $\mu$. 
In MLC, the input binary code $\Cc$ is the Cartesian product of $m$ binary codes of length $N$, one per modulation level, i.\ e.\ $\Cc = \Cc_1 \times \dotsc \times \Cc_m$, and the input distribution for the symbol $x(b_1,\dotsc,b_j)$ has the form
\beq
P_X^{\rm mlc}(x) = P_{B_1,\dotsc,B_M}(b_1,\dotsc,b_m) = \prod_{j=1}^m P_{B_j}(b_j).
\label{eq:symbol_prob_mlc}
\eeq
For a fixed input distribution on the bits, MLC achieve the mutual information \cite{imai1977nmc,wachsmann1999mct} both with ML joint decoding and with multistage decoding. The largest information rate that can be achieved with MLC under the constraint $x\in\Xc$ is 
\beq\label{eq:opt_mlc}
\Cmlc{} =\sup_{\substack{P_{B_1}(B_1),\dotsc, P_{B_m}(B_m)}} I(X;Y).
\eeq
The error exponents of MLC with multistage decoding were derived in \cite{beyer2001aaa,beyer2001aaa2,guillen2008bic,ingber2009cee}, where it was also shown the error exponent is upper bounded by one.

\section{Achievable Rates with BICM}
\label{sec:achievableRates}

For the BICM scheme described above, it was shown in \cite{guillen2008bic,martinez2009bic} that the rate
\beq
\Igmi = \sup_{s>0}\left(\EE\left[\log \frac{\prod_{j=1}^m q_j\bigl(b_j(X),Y\bigr)^s}{\frac{1}{M}\sum_{x'}\prod_{j=1}^m q_j\bigl(b_j(x'),Y\bigr)^s }\right]\right),
\eeq
also named generalized mutual information (GMI), is achievable with equiprobable bits, $P_{B_j}(b) = \frac{1}{2}$. The proof is based on a simple extension of Gallager's analysis of ML decoding in terms of error exponents to mismatched decoding \cite{merhav1994irm,ganti2000mdr}. References \cite{guillen2008bic,martinez2009bic} also show that the above rate may be decomposed as the sum of $m$ bit GMI terms, and that it coincides with the BICM capacity defined in \cite{caire1998bic}. The next result generalizes this result for arbitrary bit probabilities.

\begin{theorem}\label{theorem:gmiBICM-sum}
The generalized mutual information of the BICM mismatched decoder is equal to the sum of the generalized mutual informations of $m$ binary-input channels,
\begin{align}\label{theorem:gmiBICM-sum-eq1}
\Igmi
&=\sup_{s>0}\left(\sum_{j=1}^m \EE\left[\log\frac{q_j(B_j,Y)^s}{\sum_{b'=0}^1q_j(b_j',Y)^s P_{B_j}(b'_j)}\right]\right).
\end{align}
The expectation is carried out according to the joint distribution $P_{B_j}(b_j)P_j(y|b_j)$, with 
\beq\label{theorem:gmiBICM-sum-eq3}
P_j(y|b) \eqdef \sum_{x\in\Xjb}\frac{P_{Y|X}(y|x)P_X^{\rm bicm}(x)}{\sum_{x'\in\Xjb}P_X^{\rm bicm}(x')}.
\eeq

An alternative expression is
\begin{align}\label{theorem:gmiBICM-sum-eq2}
\Igmi
& = \sup_{s>0}\left(\sum_{j=1}^m \EE\left[ \log\frac{q_j\bigl(b_j(X),Y\bigr)^s}{\sum_{b'=0}^1q_j(b_j',Y)^sP_{B_j}(b_j')}\right]\right),
\end{align}
where the expectation is done according to the joint distribution $P_X^{\rm bicm}(x)P_{Y|X}(y|x)$.
\end{theorem}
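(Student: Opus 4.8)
The plan is to exploit the product structure of both the BICM metric and the BICM input distribution, which makes the denominator of the GMI factorize across bit positions. Starting from the general GMI expression for the mismatched decoder with input distribution $P_X^{\rm bicm}$ and metric $q^{\rm bicm}(x,y) = \prod_{j=1}^m q_j(b_j(x),y)$, the denominator is the sum over all symbols $x'$ of $P_X^{\rm bicm}(x')\prod_{j=1}^m q_j(b_j(x'),y)^s$; for equiprobable bits this recovers the $\frac{1}{M}$ form quoted just before the theorem.

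The first key step is to reparametrize this sum over symbols $x'$ as a sum over binary labels $(b_1',\dots,b_m')\in\FF_2^m$, which is legitimate because the labeling $\mu$ is a bijection. Since $P_X^{\rm bicm}(x') = \prod_{j=1}^m P_{B_j}(b_j(x'))$ factorizes, the summand is a product over $j$ of factors depending only on $b_j'$, and the sum of such a product over independent indices equals the product of the partial sums. Hence the denominator factors as $\prod_{j=1}^m \bigl(\sum_{b_j'=0}^1 P_{B_j}(b_j') q_j(b_j',y)^s\bigr)$.

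With the numerator already a product, taking the logarithm turns the ratio into a sum of $m$ bit-level log-ratios, and linearity of expectation yields \eqref{theorem:gmiBICM-sum-eq2} at once. To obtain \eqref{theorem:gmiBICM-sum-eq1}, I observe that the $j$-th summand depends on $X$ only through $b_j(X)$; marginalizing the joint law $P_X^{\rm bicm}(x)P_{Y|X}(y|x)$ over all bits other than the $j$-th produces exactly the pair $P_{B_j}(b_j)$ and $P_j(y|b_j)$ of \eqref{theorem:gmiBICM-sum-eq3}, the normalizing sum $\sum_{x'\in\Xjb} P_X^{\rm bicm}(x')$ collapsing to $P_{B_j}(b_j)$ because the remaining bit marginals each sum to one. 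This identifies \eqref{theorem:gmiBICM-sum-eq1} with \eqref{theorem:gmiBICM-sum-eq2}.

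The main obstacle is justifying the factorization step cleanly: one must confirm that the independence of bits under $P_X^{\rm bicm}$ together with the bijectivity of $\mu$ genuinely permits interchanging the sum and the product across the $m$ label positions. Once this distributive identity is in hand, the remaining work — linearity of expectation and the marginalization argument — is essentially notational.
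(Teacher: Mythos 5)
Your proposal is correct and follows essentially the same route as the paper: the distributive identity turning the sum over symbols of a product over label positions into a product of binary sums (justified by the bijectivity of $\mu$ and the factorization of $P_X^{\rm bicm}$) is exactly the paper's key observation, and your marginalization argument for \eqref{theorem:gmiBICM-sum-eq1} matches the paper's splitting of the sum over $x$ into the sets $\Xjb$ with the normalization $P_{B_j}(b_j)=\sum_{x'\in\Xjb}P_X^{\rm bicm}(x')$. No gaps.
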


\begin{proof}
For fixed $s$ and probabilities $P_X^{\rm bicm}(x) = \prod_{j=1}^m P_{B_j}\bigl(b_j(x)\bigr)$ the GMI can be written as
\begin{align}
\Igmis
& = \EE\left[\log \frac{q^{\rm bicm}(X,Y)^s}{\sum_{x'}q^{\rm bicm}(x',Y)^s P_{X}^\text{bicm}(x')}\right] \\
& = \EE\left[\log \frac{\prod_{j=1}^m q_j\bigl(b_j(X),Y\bigr)^s}{\sum_{x'}\prod_{j=1}^m q_j\bigl(b_j(x'),Y\bigr)^s P_{B_j}\bigl(b_j(x')\bigr)}\right],
\label{eq:gmi_log}
\end{align}
where the expectation is carried out according to $P_X^{\rm bicm}(x)P_{Y|X}(y|x)$.

We now have a closer look at the denominator in the logarithm of \eqref{eq:gmi_log}. The key observation is that the sum over the constellation points ($x'\in\Xc$) of the product of a function $f\bigl(b_j(x')\bigr)$ evaluated at all the binary label positions admits an alternative expression, namely
\begin{align}
\sum_{x'\in\Xc}&\Biggl(\prod_{j=1}^m f\bigl(b_j(x')\bigr)\Biggr) = \prod_{j=1}^m \Biggl(\sum_{b_j\in \{0,1\}} f(b_j)\Biggr). 
\end{align}
Indeed, after carrying out the product in the right-hand side, we obtain the desired sum over all $2^m$ binary $m$-tuples $(b_1,\dotsc,b_m)$ of summands of the form $f(b_1)\cdots f(b_m)$.

Therefore, for the specific choice $f\bigl(b_j(x')\bigr) = q_j\bigl(b_j(x'),Y\bigr)^s P_{B_j}\bigl(b_j(x')\bigr)$ we have
the product over all label positions of the sum of the probabilities of the bit $b_j$  being zero and one, i.e.,
\begin{align}
\sum_{x'\in\Xc}&\left(\prod_{j=1}^m q_j\bigl(b_j(x'),Y\bigr)^s P_{B_j}(b_j(x))\right) \\
&= \prod_{j=1}^m \left(\sum_{b'\in\{0,1\}}q_j(b_j,Y)^sP_{B_j}(b_j')\right).
\end{align}

Next, going back to \eqref{eq:gmi_log}, we obtain
\begin{align}
\Igmis &= \EE \left[ \log\left(\prod_{j=1}^m\frac{q_j\bigl(b_j(X),Y\bigr)^s}{\sum_{b'=0}^1q_j(b_j',Y)^sP_{B_j}(b_j')}\right)\right], \\
&= \sum_{j=1}^m \EE\left[ \log\frac{q_j\bigl(b_j(X),Y\bigr)^s}{\sum_{b'=0}^1q_j(b_j',Y)^sP_{B_j}(b_j')}\right],
\end{align}
where the expectation is done according to the joint distribution $P_X^{\rm bicm}(x)P_{Y|X}(y|x)$. This gives Eq.~\eqref{theorem:gmiBICM-sum-eq2} since the generalized mutual information is the supremum over all $s$ \cite{merhav1994irm,ganti2000mdr}. As for Eq.~\eqref{theorem:gmiBICM-sum-eq2}, we derive it by noting that, for each $j$,  the summation over $x$ in the expectation can be split into two parts and rearranged as follows,
\begin{align}
\sum_x f(x) &= \sum_{b_j\in\{0,1\}}\sum_{x\in\Xjb}f(x) \\ &= \sum_{b_j\in\{0,1\}}P_{B_j}(b_j)\sum_{x\in\Xjb}\frac{f(x)}{P_{B_j}(b_j)}.
\end{align}
As $P_{B_j}(b_j) = \sum_{x'\in\Xjb}P_X^{\rm bicm}(x')$ by construction, recovering the expression of $f(x)$ we obtain $P_j(y|b_j)$ in Eq.~\eqref{theorem:gmiBICM-sum-eq3}.
\end{proof}

The following result applies to BICM with the decoding metric given in Eq.~\eqref{eq:bit_metric_sum}.
\begin{corollary}
For the classical BICM decoder with metric in Eq.~\eqref{eq:bit_metric_sum} the supremum over $s$ is achieved at $s=1$, and $\Igmi = \sum_{j=1}^m I(B_j;Y)$.
\end{corollary}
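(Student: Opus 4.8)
The plan is to reduce the corollary to the classical fact that a \emph{matched} decoding metric makes the generalized mutual information collapse to the true mutual information, with the supremum over $s$ attained at $s=1$. The starting point is the decomposition already granted by Theorem~\ref{theorem:gmiBICM-sum}: the BICM GMI is a sum over the bit positions $j=1,\dotsc,m$ of per-bit terms, each being the GMI of the binary-input channel $P_j(y\mid b)$ of~\eqref{theorem:gmiBICM-sum-eq3} under the metric $q_j(b,y)$. First I would use~\eqref{theorem:gmiBICM-sum-eq3} together with $P_{B_j}(b)=\sum_{x'\in\Xjb}P_X^{\rm bicm}(x')$ to relate the classical bit metric~\eqref{eq:bit_metric_sum} to the sub-channel law $P_j(y\mid b)$ and to recognize the $j$-th per-bit decoder as the matched, maximum-likelihood decoder for the binary-input channel $P_j(y\mid b)$, so that one may take $q_j(b,y)=P_j(y\mid b)$. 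Once this identification is in place, the problem reduces to the scalar question of optimizing $s$ for a matched metric.

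The core step is the matched-decoding collapse. For fixed $j$ and $s$ I would introduce the tilted (escort) a posteriori law $P^{(s)}_{B_j}(b\mid y)\propto q_j(b,y)^sP_{B_j}(b)$, normalized over $b\in\{0,1\}$. Expressing the denominator $\sum_{b'}q_j(b',y)^sP_{B_j}(b')$ through this law lets me recast the $j$-th bracket of~\eqref{theorem:gmiBICM-sum-eq1} as $I(B_j;Y)$ minus an averaged relative entropy, $\EE_Y\bigl[D\bigl(P_{B_j\mid Y}(\cdot\mid Y)\,\|\,P^{(s)}_{B_j}(\cdot\mid Y)\bigr)\bigr]$. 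Since relative entropy is nonnegative, this already exhibits each per-bit term as a lower bound on $I(B_j;Y)$, valid for every $s>0$. At $s=1$ the escort law collapses to the genuine posterior $P_{B_j\mid Y}$ --- this is exactly where matchedness is used --- so the divergence vanishes and the bound is met with equality. As each per-bit term is thus at most $I(B_j;Y)$ for every $s$ and attains this value at $s=1$, the supremum over $s$ is reached at $s=1$ and the $j$-th term equals $I(B_j;Y)$; summing over $j$ yields $\Igmi=\sum_{j=1}^m I(B_j;Y)$.

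I expect the identification in the first step to be the delicate part, precisely because the bit priors $P_{B_j}$ enter twice --- once inside the metric~\eqref{eq:bit_metric_sum} through $P_X^{\rm bicm}$, and once as the weights in the GMI denominator of~\eqref{theorem:gmiBICM-sum-eq1}. The reduction to $s=1$ and to $I(B_j;Y)$ only goes through if the factor $P_{B_j}(b)$ that distinguishes~\eqref{eq:bit_metric_sum} from the properly normalized density $P_j(y\mid b)$ is accounted for correctly, so that the $s=1$ tilt reproduces the true posterior $P_{B_j\mid Y}$. If the escort/relative-entropy bookkeeping proves awkward, the fallback is the direct analytic route: differentiate the $j$-th term with respect to $s$ and show the derivative vanishes at $s=1$, where the stationarity condition reduces, after cancellation, to the tower-property identity that the $P_Y$-average of the posterior mean of $\log q_j$ equals $\EE[\log q_j(B_j,Y)]$; concavity of the term in $s$ --- which follows because $\log\sum_{b'}q_j(b',y)^sP_{B_j}(b')$ is a log-sum-exp in $s$, hence convex --- then upgrades this stationary point to the global maximizer. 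Either way the evaluation at $s=1$ is immediate: the denominator becomes the output marginal $\sum_{b'}P_j(Y\mid b')P_{B_j}(b')$, the ratio is $P_j(Y\mid B_j)$ divided by that marginal, and its expectation is precisely $I(B_j;Y)$.
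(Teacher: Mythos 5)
Your proposal follows essentially the same route as the paper's proof: identify the per-bit metric $q_j$ with the matched sub-channel law $P_j(y\mid b)$ from Theorem~\ref{theorem:gmiBICM-sum}, then invoke the collapse of the matched GMI to $I(B_j;Y)$ at $s=1$; your escort-law/relative-entropy decomposition is exactly the standard proof of that collapse, which the paper merely cites from the mismatched-decoding literature. The normalization subtlety you flag is real --- Eq.~\eqref{eq:bit_metric_sum} gives $q_j(b,y)=P_{B_j}(b)\,P_j(y\mid b)$ rather than $P_j(y\mid b)$ itself, with the extra factor depending on $b$ --- and the paper's own proof passes over it with the single word ``proportional,'' so your explicit bookkeeping of that factor is, if anything, the more careful rendering of the same argument.
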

\begin{proof}
Since the metric $q_j(b_j,y)$ is proportional to $P_j(y|b_j)$, we can identify the quantity
\beq
\EE\left[\log\frac{q_j\bigl(B_j,Y\bigr)^s}{\sum_{b_j'=0}^1 q_j(b_j',Y)^s P_{B_j}(b_j') }\right]
\eeq
as the generalized mutual information of a {\em matched} binary-input channel with transitions $P_j(y|b_j)$. Then, the supremum over $s$ is achieved at $s=1$ (that is, the mutual information $I(B_j;Y)$) and we get the desired result.
\end{proof}
In the remainder of the paper, for the sake of simplicity and without loss of generality, we focus on the classical BICM metric. Clearly, the methods and results we present generalize to other metrics, in which case, $s$ should also be optimized.

The above results suggest that we can chose the input bit distribution that yields the largest GMI, i.e.,
effectively implying shaping the bit probabilities in BICM as
\beq\label{eq:opt_bicm}
\Cbicm{} =\sup_{\substack{P_{B_1}(B_1),\dotsc, P_{B_m}(B_m)}} \sum_{j=1}^m I(B_j;Y).
\eeq
For iid codebooks, $\Cbicm{}$ is also the largest rate that can be transmitted with vanishing error probability \cite{lapidoth1996nnd}.

This capacity should be compared with the equivalent quantities on CM and multi-level coding, given in Eqs.~\eqref{eq:opt_cm} and~\eqref{eq:opt_mlc} respectively,
\begin{align}
\Ccm{} &=\sup_{P_{X}(X)}I(X;Y), \\
\Cmlc{} &=\sup_{\substack{P_{B_1}(B_1),\dotsc, P_{B_m}(B_m)}} I(X;Y).
\end{align}
Note that BICM differs from CM in the transmitter, where the modulation symbol probabilities have the specific form $P_X^{\rm bicm}(x) = \prod_{j=1}^m P_{B_j}\bigl(b_j(x)\bigr)$, and the receiver, where the symbol metric in Eq.~\eqref{eq:symbol_metric_sum} is used for decoding.

In terms of the random coding error exponent, the analysis in \cite{guillen2008bic,martinez2009bic} can be merged with the previous proof to show that for any input distribution $P_X(x)$, the block error probability $P_e$ is upper bounded by
\begin{align}
\bar{P}_e &\leq \e^{-NE_{\rm r}^q(R)}
\label{eq:pe_ave_err_exp_bicm}
\end{align}
where $E_{\rm r}^q(R)=\sup_{\substack{0\leq\rho\leq1 \\s>0}}E_0^q(\rho,s) - \rho R$, and
\beq
E_0^q(\rho,s) \eqdef -\log \EE \left[\left(\sum_{x'} P_X(x') \left(\frac{q^{\rm bicm}(x',Y)}{q^{\rm bicm}(X,Y)}\right)^s\right)^\rho\right]
\label{eq:generalized_gallager_bicm}
\eeq
is a generalized Gallager function. The expectation is carried out according to the joint distribution $P_{Y|X}(y|x)P_X(x)$.


\section{Bit Shaping for the Gaussian Channel}
\label{sec:shapingGaussian}
\subsection{Channel Model}
We consider the transmission over complex-plane signal sets ($\Xc \subset \CC$, $\Yc = \CC$) in the AWGN channel. It is a memoryless channel satisfying
\beq
Y_k=\sqrt{\snr} \,X_k +Z_k,~~~~ k=1,\dotsc,N
\label{eq:model}
\eeq 
where $Z_k$ are zero-mean, unit-variance, circularly symmetric complex Gaussian samples, and $\snr$ is the signal-to-noise ratio (SNR). 
We wish to solve the optimization problems in Eqs.~\eqref{eq:opt_cm}, ~\eqref{eq:opt_mlc}  and ~\eqref{eq:opt_bicm} with the additional constraints that $x\in\Xc$, $\EE[X] = 0$, and $\EE[|X|^2] = 1$.

We consider binary reflected Gray mapping\footnote{Recall that the binary reflected Gray mapping for $m$ bits may be generated recursively from the mapping for $m-1$ bits by prefixing a binary 0 to the mapping for $m-1$ bits, then prefixing a binary 1 to the reflected (i.\ e.\ listed in reverse order) mapping for $m-1$ bits. For QAM modulations in the Gaussian channel, the symbol mapping is the Cartesian product of Gray mappings over the in-phase and quadrature PAM components.}. 
For shaping, $2^{m}$-QAM signal sets are of special interest; this constellation is the Cartesian product of two $2^\frac{m}{2}$-PAM constellations, one for each of the in-phase and quadrature components of the channel. Since the optimum input distribution is known to be Gaussian, a good input distribution over the set $\Xc$ should approach in some sense a Gaussian density. Symmetry between the in-phase and quadrature components and along the zero axis (so that the positive and negative plane have equal probability) dictate that the optimization problems in Eqs.~\eqref{eq:opt_cm} and~\eqref{eq:opt_mlc} respectively have 
\begin{itemize}
\item $2^{\frac{m}{2}-1}-1$ free parameters for CM, and
\item $\frac{m}{2}-1$ free parameters for BICM and MLC.
\end{itemize}
For BICM we used the symmetries of binary reflected Gray mapping and the fact that the most significant bit selects the positive or negative half-plane, and always has probability $\frac{1}{2}$.

Note that the CM optimization problem does not restrict the input distribution to be $P_X(x) = \prod_{j=1}^m P_{B_j}(b_j(x))$, hence being able to achieve potentially larger rates. As we shall see, the resulting difference in information rates is however marginal. Moreover, note that there is an exponential relationship between the number of free parameters for BICM and CM, which can induce rather large computational savings for large signal sets. 
For example, since for 16-QAM there is only one free parameter for MLC and CM, the optimization will result in the best performance, i.e., MLC is optimal and BICM, as we shall see, is very close. However, for $m>4$ this is no longer true and the optimization over symbol probabilities without restriction $P_X(x)$ to be the product of bit probabilities could potentially yield larger rates.

\subsection{Numerical Examples}

Figure \ref{fig:16qam_sh_equi} shows the improvement in BICM capacity derived from shaping for 16-QAM with binary reflected Gray mapping. As we observe $\Cbicm{}$ (dashed) is almost indistinguishable from $\Ccm{}$ or $\Cmlc{}$ (thin solid) or channel capacity itself (thick solid). This shows that shaping bits for BICM can recover the BICM capacity loss for equiprobable bits and effectively close the gap with CM and MLC.
Remark that the BICM demodulator is a one-shot non-iterative demodulator. 
In general, the decoding complexity of BICM is larger than that of MLC, since the codes of MLC are shorter. 
In practice, however, if the decoding complexity grows linearly with the number of bits in a a codeword, e.\ g.\ with LDPC or turbo codes, the overall complexity of BICM becomes comparable to that of MLC.
\begin{figure}[t]
 \centering 
 \includegraphics[width=1\columnwidth]{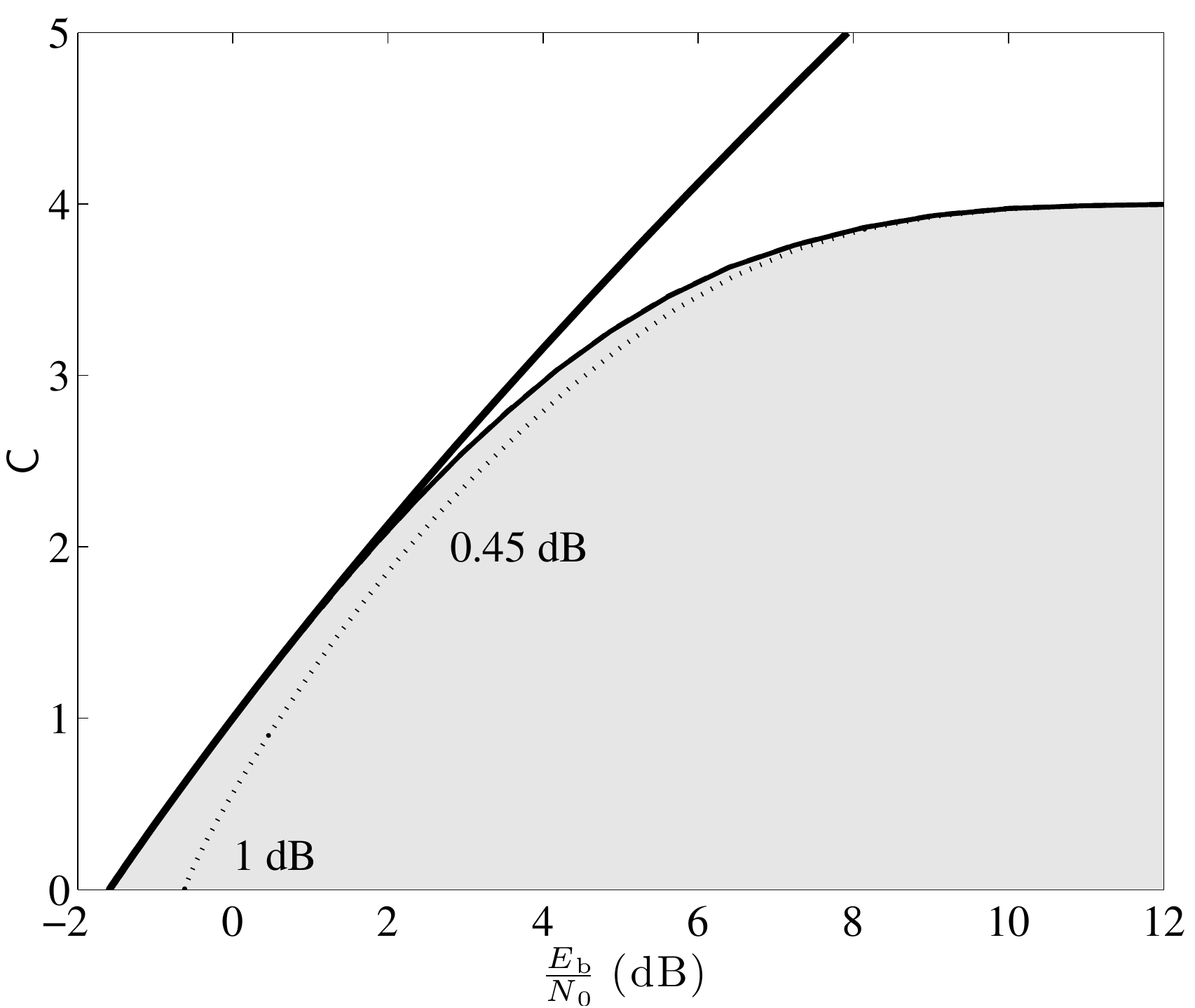}
 \caption{Capacities for Gaussian inputs (thick solid), CM/MLC with shaping (thin solid), BICM with shaping (dashed) and BICM with equiprobable inputs (dotted line) for 16-QAM with Gray mapping and bit shaping as a function of $\EbNo$ (dB).}
 \label{fig:16qam_sh_equi}
 \vspace{-4mm}
\end{figure}

Figure \ref{fig:exponents_cm_bicm} shows the error exponents for CM and BICM, with and without shaping, for 16-QAM at $\snr=8$ dB. When shaping is used, the input distribution is the corresponding optimal capacity achieving distribution. We observe that when shaping is used, in the region near capacity, the overall BICM error exponent is very close to that of CM, while when equiprobable bits are used, the exponent deviates from that of CM. Remark that, according to \cite{guillen2008bic,martinez2009bic} the BICM error exponent cannot be larger than that of CM,  as opposed to that of the independent parallel channel model.  Furthermore, note that the error exponent of BICM is much larger than that of MLC (always being given by the minimum of the error exponents of the various levels, which results in an error exponent smaller than 1) \cite{beyer2001aaa,beyer2001aaa2,guillen2008bic,ingber2009cee}. Therefore, in terms of error probability, BICM outperforms MLC. 
\begin{figure}[t]
 \centering 
 \includegraphics[width=1\columnwidth]{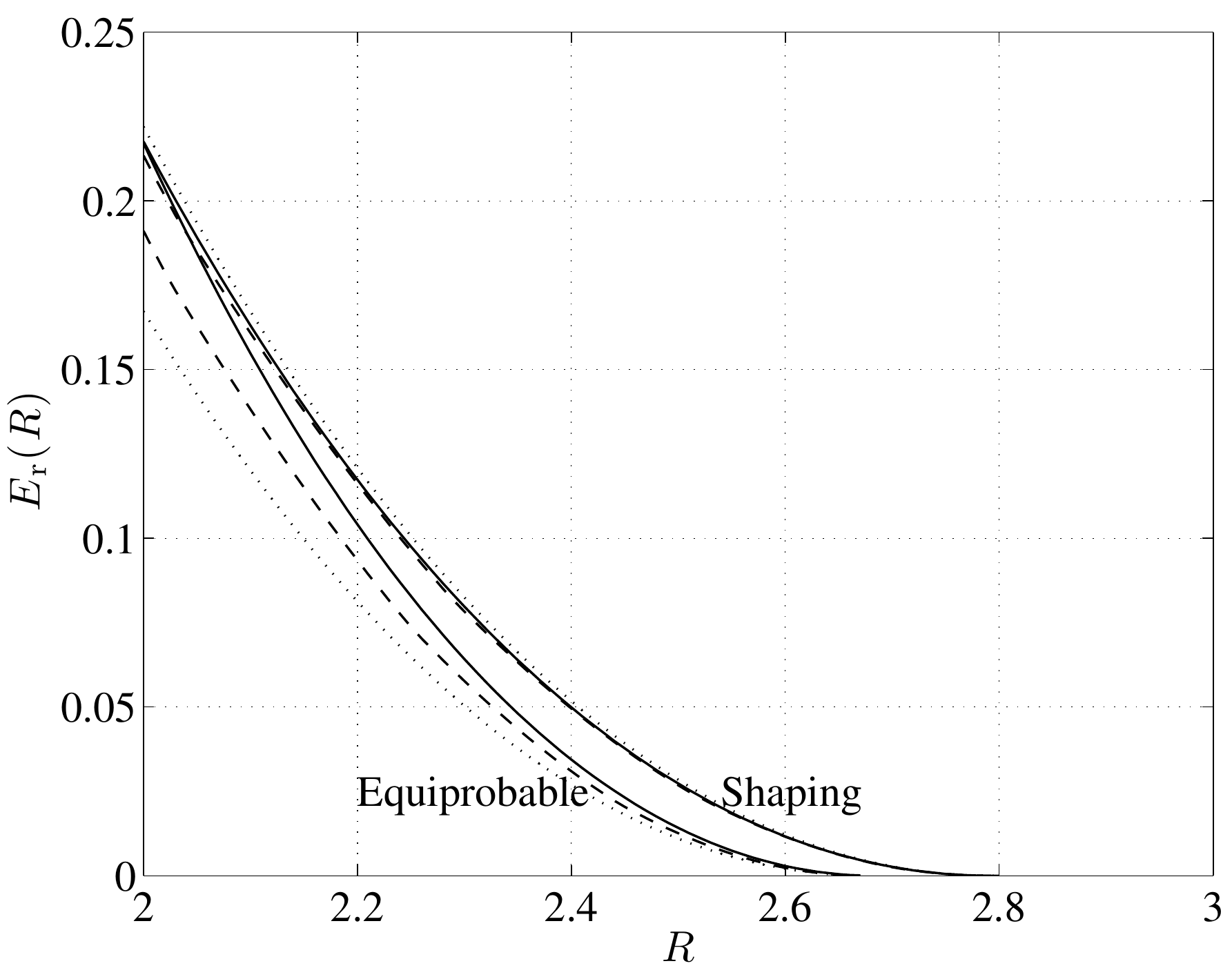}
 \caption{Error exponent zoom at capacity for 16-QAM with and without shaping for CM (solid) and BICM (dashed) at $\snr=8$ dB. The error exponent of the BICM parallel channel model is shown for comparison (dotted). When shaping is used, the input distribution is the corresponding optimal capacity achieving distribution.}
 \label{fig:exponents_cm_bicm}
   \vspace{-4mm}
\end{figure}

\subsection{Wideband Regime}

The gain from shaping in BICM is especially remarkable at low $\snr$, the wideband regime recently discussed at length by Verd{\'u} \cite{verdu2002sew}. Following his methodology, rather than studying the exact expression of the information rate, one considers a second-order Taylor series in terms of $\snr$,
\begin{align}
   R(\snr) = \ca\snr +\cb\snr^2 + \ord\bigl(\snr^2\bigr),
   \label{eq:c_snr}
\end{align}
where the
notation $\ord(\snr^2)$ indicates that the remaining terms vanish faster than a function $a\snr^2$, for $a > 0$ and small $\snr$. A scheme is  said to be first- and second-order optimal if $\ca = 1$ and $\cb = -\frac{1}{2}$, as it is for the channel capacity.
In those conditions, such a system is both power- and bandwidth-efficient. For instance, it is well known that for low $\snr$, QPSK is both first- and second-order efficient \cite{verdu2002sew}.

The low-$\snr$ performance of BICM was studied in \cite{martinez2008bic}, where general expressions for the coefficients $\ca$ and $\cb$ were given for general mapping rules and equiprobable signaling. For the particular case of binary reflected Gray mapping with squared QAM constellations, it was found that BICM was suboptimal, in the sense that it did not achieve the optimum $\ca$ and $\cb$.
References~\cite{stierstorfer2009asymptotically,alvarado2009ooc} propose alternative mapping rules for BICM that achieve $\ca=1$, or equivalently $\ebnolim\eqdef \frac{\log2}{\ca} = -1.59$ dB, with equiprobable signaling. Incidentally, this disproves the conjecture from \cite{caire1998bic} that binary-reflected Gray mapping is the optimum labeling rule for BICM schemes. However, the mappings of \cite{stierstorfer2009asymptotically,alvarado2009ooc} are not second-order optimal.

In the case of non-equiprobable signaling, binary reflected Gray mapping becomes optimal both in terms of $\ca$ and $\cb$. 
\begin{theorem}
Shaping makes BICM transmission over QAM modulations with binary reflected Gray mapping first- and second-order optimal, i.e., $\ca = 1$ and $\cb = -\frac{1}{2}$.
\end{theorem}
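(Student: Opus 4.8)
The plan is to squeeze the optimally shaped BICM rate $\Cbicm{}(\snr)$ between two functions that share the same first- and second-order Taylor coefficients, namely $\snr-\tfrac12\snr^2+\ord(\snr^2)$, and then extract $\ca$ and $\cb$ by a squeeze argument on the expansion \eqref{eq:c_snr}. The upper bound is immediate: by construction $\Cbicm{}(\snr)\le\Ccm{}(\snr)$, and constraining the input to the finite set $\Xc$ can only lower the rate, so $\Ccm{}(\snr)\le\log(1+\snr)$, the capacity of the unconstrained complex AWGN channel. Since $\log(1+\snr)=\snr-\tfrac12\snr^2+\ord(\snr^2)$, this already yields $\ca\le1$ and, at the matching first order, $\cb\le-\tfrac12$. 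The real work is the matching lower bound.

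For the lower bound I would exhibit a single (degenerate) member of the shaping family in \eqref{eq:opt_bicm} whose rate already equals $\snr-\tfrac12\snr^2+\ord(\snr^2)$. The structural fact I would exploit is that, under binary reflected Gray mapping on $2^m$-QAM, the two most significant bits are the sign bits of the in-phase and quadrature PAM components, while every remaining bit only refines the magnitude within a half-line. Fixing all non-sign bits to deterministic values—an admissible boundary point of the product set $P_{B_1},\dots,P_{B_m}$—therefore collapses each PAM component to a two-point antipodal set and, after the power normalization $\EE[|X|^2]=1$, collapses the whole constellation to standard QPSK with equiprobable points. Deterministic bits contribute zero to $\sum_j I(B_j;Y)$; and for QPSK the in-phase and quadrature sign bits drive independent real channels, so BICM is lossless and $\sum_j I(B_j;Y)=I_{\rm QPSK}(\snr)$.

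I would then invoke the known fact that QPSK is first- and second-order optimal in the wideband regime \cite{verdu2002sew}, i.e.\ $I_{\rm QPSK}(\snr)=\snr-\tfrac12\snr^2+\ord(\snr^2)$. Because $\Cbicm{}(\snr)$ is a supremum over the shaping parameters, it dominates this particular choice for every $\snr$, giving $\Cbicm{}(\snr)\ge I_{\rm QPSK}(\snr)$. Combining with the upper bound, $\Cbicm{}(\snr)$ is trapped between two functions with identical expansions; applying the squeeze to $\Cbicm{}(\snr)/\snr$ gives $\ca=1$, and applying it again to $\bigl(\Cbicm{}(\snr)-\snr\bigr)/\snr^2$ gives $\cb=-\tfrac12$, so the coefficients are pinned down without ever needing the exact optimizing distribution.

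The main obstacle is making the lower-bound construction rigorous for all $m$: one must verify, from the recursive definition of binary reflected Gray mapping, that fixing the magnitude bits genuinely leaves a zero-mean antipodal sub-constellation in each dimension (I have checked this for $4$-PAM, where fixing the least significant bit yields $\{-3,+3\}$ or $\{-1,+1\}$), and that this degenerate point respects the symmetry reductions used to count the free parameters. A secondary, milder point is the interchange of the supremum with the expansion, which is harmless here because pointwise domination between two rates transfers directly to their leading Taylor coefficients. Finally, I would remark that the argument reveals optimality is attained already in the QPSK limit, so at low $\snr$ the shaping is effectively selecting a near-antipodal input, in line with the wideband intuition that low-cardinality, flash-like signaling is second-order optimal.
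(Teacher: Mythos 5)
Your proposal takes essentially the same route as the paper: the paper's argument is precisely that shaping lets one fix the non-sign (magnitude) bits of the binary reflected Gray label to deterministic values, collapsing each PAM component to an antipodal pair --- i.e.\ QPSK after renormalization of mean and energy --- whose known first- and second-order optimality \cite{verdu2002sew} then transfers to $\Cbicm{}$. Your version is if anything more complete, since you also make explicit the matching upper bound $\Cbicm{}\le\Ccm{}\le\log(1+\snr)$ and the squeeze needed to pin down both $\ca$ and $\cb$, steps the paper leaves implicit.
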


The key fact is that the bit probabilities are such that a QPSK constellation is effectively selected. To see how, note that for $m = 2$ we have QPSK with Gray mapping. Limiting ourselves to one dimension, the binary reflected Gray mapping for $\frac{m}{2} +1$ bits is constructed from the mapping for $\frac{m}{2}$ bits by prefixing a binary 0 to the mapping for $\frac{m}{2}-1$ bits, then prefixing a binary 1 to the reflected (i.\ e.\ listed in reverse order) mapping for $\frac{m}{2}-1$ bits. With shaping, one has the flexibility to fix each of this additional bits to a given value, say, 0, so that one is effectively transmitting over a BPSK constellation (QPSK over the two quadratures) when the resulting constellation is normalized in mean and energy. Note that this is property does not necessarily hold for other mapping rules.

\bibliographystyle{IEEE}

\begin{thebibliography}{10}

\bibitem{ungerboeck1982ccm}
G.~Ungerb\"ock,
\newblock ``{Channel Coding With Multilevel/Phase Signals.},''
\newblock {\em IEEE Trans. Inf. Theory}, vol. 28, no. 1, pp. 55--66, 1982.

\bibitem{zehavi1992ptc}
E.~Zehavi,
\newblock ``{8-PSK trellis codes for a Rayleigh channel},''
\newblock {\em IEEE Trans. Commun.}, vol. 40, no. 5, pp. 873--884, 1992.

\bibitem{caire1998bic}
G.~Caire, G.~Taricco, and E.~Biglieri,
\newblock ``{Bit-interleaved coded modulation},''
\newblock {\em IEEE Trans. Inf. Theory}, vol. 44, no. 3, pp. 927--946, 1998.

\bibitem{guillen2008bic}
A.~Guill\'en~i F\`abregas, A.~Martinez, and G.~Caire,
\newblock {\em {Bit-Interleaved Coded Modulation}}, vol.~5,
\newblock Foundations and Trends on Communications and Information Theory, Now
  Publishers, 2008.

\bibitem{imai1977nmc}
H.~Imai and S.~Hirakawa,
\newblock ``{A new multilevel coding method using error-correcting codes},''
\newblock {\em IEEE Trans. Inf. Theory}, vol. 23, no. 3, pp. 371--377, May
  1977.

\bibitem{wachsmann1999mct}
U.~Wachsmann, R.~F.~H. Fischer, and J.~B. Huber,
\newblock ``{Multilevel codes: theoretical concepts and practical design
  rules},''
\newblock {\em IEEE Trans. Inf. Theory}, vol. 45, no. 5, pp. 1361--1391, Jul.
  1999.

\bibitem{merhav1994irm}
N.~Merhav, G.~Kaplan, A.~Lapidoth, and S.~Shamai~(Shitz),
\newblock ``{On information rates for mismatched decoders},''
\newblock {\em IEEE Trans. Inf. Theory}, vol. 40, no. 6, pp. 1953--1967, 1994.

\bibitem{ganti2000mdr}
A.~Ganti, A.~Lapidoth, and I.~E. Telatar,
\newblock ``{Mismatched decoding revisited: general alphabets, channels with
  memory, and the wideband limit},''
\newblock {\em IEEE Trans. Inf. Theory}, vol. 46, no. 7, pp. 2315--2328, 2000.

\bibitem{gallager1968ita}
R.~G. Gallager,
\newblock {\em {Information Theory and Reliable Communication}},
\newblock John Wiley \& Sons, Inc. New York, NY, USA, 1968.

\bibitem{martinez2009bic}
A.~Martinez, A.~Guill\'en~i F\`abregas, G.~Caire, and F.~Willems,
\newblock ``Bit-interleaved coded modulation revisited: {A} mismatched decoding
  perspective,''
\newblock {\em IEEE Trans on Inf. Theory}, vol. 55, no. 6, pp. 2756--2765, Jun.
  2009.

\bibitem{beyer2001aaa}
G.~Beyer, K.~Engdahl, and K.~S. Zigangirov,
\newblock ``{Asymptotical analysis and comparison of two coded modulation
  schemes using PSK signaling - Part I},''
\newblock {\em IEEE Trans. Inf. Theory}, vol. 47, no. 7, pp. 2782--2792, 2001.

\bibitem{beyer2001aaa2}
G.~Beyer, K.~Engdahl, and K.~S. Zigangirov,
\newblock ``{Asymptotical analysis and comparison of two coded modulation
  schemes using PSK signaling - Part II},''
\newblock {\em IEEE Trans. Inf. Theory}, vol. 47, no. 7, pp. 2793--2806, 2001.

\bibitem{ingber2009cee}
A.~Ingber and M.~Feder,
\newblock ``{Capacity and Error Exponent Analysis of Multilevel Coding with
  Multistage Decoding},''
\newblock in {\em IEEE Int. Symp. Inf. Theory, Seoul, Korea}, Jul. 2009, pp.
  1799--1803.

\bibitem{lapidoth1996nnd}
A.~Lapidoth,
\newblock ``{Nearest neighbor decoding for additive non-Gaussian noise
  channels},''
\newblock {\em IEEE Trans. Inf. Theory}, vol. 42, no. 5, pp. 1520--1529, Sept.
  1996.

\bibitem{verdu2002sew}
S.~Verd\'u,
\newblock ``{Spectral efficiency in the wideband regime},''
\newblock {\em IEEE Trans. Inf. Theory}, vol. 48, no. 6, pp. 1319--1343, Jun.
  2002.

\bibitem{martinez2008bic}
A.~Martinez, A.~{Guill\'en i F\`abregas}, G.~Caire, and F.~Willems,
\newblock ``Bit-interleaved coded modulation in the wideband regime,''
\newblock {\em IEEE Trans. Inf. Theory}, vol. 54, no. 12, pp. 5447--5455, Dec.
  2008.

\bibitem{stierstorfer2009asymptotically}
C.~Stierstorfer and R.~F.~H. Fischer,
\newblock ``{Asymptotically optimal mappings for BICM with M-PAM and M-QAM},''
\newblock {\em IET Electronics Letters}, vol. 45, no. 3, pp. 173--174, Jan.
  2009.

\bibitem{alvarado2009ooc}
E.~Agrell and A.~Alvarado,
\newblock ``{On optimal constellations for BICM at low SNR},''
\newblock in {\em IEEE Inf. Theory Workshop, Taormina, Italy}, Oct. 2009.

\end{thebibliography}



\end{document}